\documentclass[10pt, conference, letterpaper]{IEEEtran}

\IEEEoverridecommandlockouts

\usepackage{amsthm}
\usepackage{tikz}
\usetikzlibrary{calc,decorations.pathreplacing,arrows.meta}
\usepackage{color}
\usepackage{epsfig}

\usepackage{algorithmicx,algorithm}
\usepackage[noend]{algpseudocode}
\usepackage{adjustbox}
\usepackage{amsmath}
\usepackage{amssymb}
\usepackage{times,graphicx,amsfonts}
\usepackage{bm}
\usepackage{cases}
\usepackage{enumitem}

\newtheorem{lemma}{Lemma}

\newtheorem{theorem}{Theorem}

\newtheorem{corollary}{Corollary}

\usepackage{cite}
\usepackage{subcaption}

\usepackage{cuted}

\begin{document}
	
	\title{Stochastic Geometry of Cylinders: Characterizing Inter-Nodal Distances for 3D UAV Networks}
    
\author{
\IEEEauthorblockN{Yunfeng Jiang}
\IEEEauthorblockA{\textit{School of Mathematical Sciences} \\
\textit{Beijing Normal University}\\
Beijing, China \\
202211130045@mail.bnu.edu.cn}
\and
\IEEEauthorblockN{Zhiming Huang}
\IEEEauthorblockA{\textit{Department of Computer Science} \\
\textit{University of Victoria}\\
BC, Canada \\
zhiminghuang@uvic.ca}
\and
\IEEEauthorblockN{Jianping Pan}
\IEEEauthorblockA{\textit{Department of Computer Science} \\
\textit{University of Victoria}\\
BC, Canada \\
pan@uvic.ca}

}
	\maketitle
	
	\begin{abstract}
		The analytical characterization of coverage probability in finite three-dimensional wireless networks has long remained an open problem, hindered by the loss of spatial independence in finite-node settings and the coupling between link distances and interference in bounded geometries. This paper closes this gap by presenting the first exact analytical framework for coverage probability in finite 3D networks modeled by a binomial point process within a cylindrical region. To bypass the intractability that has long hindered such analyses, we leverage the independence structure, convolution geometry, and derivative properties of Laplace transforms, yielding a formulation that is both mathematically exact and computationally efficient. Extensive Monte Carlo simulations verify the analysis and demonstrate significant accuracy gains over conventional Poisson-based models. The results generalize to any confined 3D wireless system, including aerial, underwater, and robotic networks.
	\end{abstract}
	
	
	\section{Introduction}
The study of large-scale multi-agent systems, ranging from autonomous vehicle fleets to robotic swarms and \emph{unmanned aerial vehicles (UAVs)}, has brought renewed attention to the fundamental problem of analyzing and optimizing their collective communication performance. At the heart of this problem lies an open question that has remained unresolved for more than a decade:

\emph{Can we obtain a \textbf{precise analytical characterization of coverage probability} for a \textbf{finite, three-dimensional~(3D)} wireless network where nodes are randomly distributed within a bounded region?}

This question is central to understanding how connectivity, reliability, and interference jointly scale in cooperative autonomous systems. A rigorous solution would bridge the long-standing gap between stochastic asymptotic models, which assume infinite spatial extent, and simulation-based studies, which lack analytical generality and physical interpretability.

However, resolving this question is particularly challenging. The finite-node constraint destroys the spatial independence assumed in classical Poisson models, while three-dimensional bounded geometries introduce strong coupling between the serving distance and aggregate interference. These issues are further compounded by realistic fading and mobility models, making exact analysis notoriously intractable.

Existing research has made progress in special cases~\cite{3,4,5,6,7}. Most works model UAV networks using cylindrical or planar geometries, focusing primarily on air-to-ground links. For example, when a ground user connects to its nearest UAV or when UAVs are confined to a 2D plane. Others simplify the problem by fixing one UAV or ignoring height variations altogether. As a result, the inter-UAV distance distribution and coverage probability in finite 3D cylindrical spaces remain analytically unsolved, leaving a key theoretical and practical gap in understanding aerial-to-aerial network performance.

We resolve this long-standing open question by deriving, for the first time, the exact analytical characterization of the coverage probability for finite three-dimensional wireless networks with randomly distributed nodes inside a bounded cylindrical region. Our key technical innovation is a complete geometric–probabilistic framework that decomposes the analysis into four tractable steps:
\begin{enumerate}
    \item deriving the exact distance distribution between two uniformly random nodes within a finite 3D cylinder;
    \item obtaining the serving-link distribution via order statistics;
    \item modeling the aggregate interference through its conditional Laplace transform; and
    \item combining these components using the tower law of expectation to yield a closed-form integral expression for coverage probability.
\end{enumerate}
To bypass the intractability that has long hindered finite-node 3D analysis, we leverage the independence structure, convolution geometry, and derivative properties of Laplace transforms, resulting in an expression that is both mathematically exact and computationally efficient. Although motivated by UAV swarm networks, the theoretical results are general and directly applicable to any finite 3D wireless system, such as aerial, underwater, or robotic networks, where nodes are randomly deployed in confined spatial domains.

Furthermore, we conduct extensive Monte Carlo simulations to corroborate the analytical results, demonstrating near-perfect agreement between theory and experiment across diverse geometric configurations. These validations confirm both the accuracy and generality of the proposed framework, which consistently outperforms the conventional methods by correctly accounting for finite-node and boundary effects.

The remainder of this paper is organized as follows. Section II reviews related work. Section III introduces the system model and problem formulation. Section IV presents the main analytical framework, leading to the main results. Section V validates the analysis through numerical results and compares the proposed model with the conventional baseline. Section VI concludes the paper and outlines directions for future research.

	\section{Related works}
    In the 2D domain, the work of \cite{9,10} was highly influential. By providing the exact distance distribution for points within an arbitrary triangle, their result enabled precise performance analysis of networks deployed in realistic, irregularly shaped areas. The profound impact of this 2D result has directly motivated the push for equivalent analysis in 3D space, particularly for UAV networks.

	Modeling 3D UAV networks is crucial for performance analysis, with a significant body of work focusing on metrics like coverage probability. Among various spatial models, the cylindrical configuration is frequently adopted due to its resemblance to practical deployment scenarios. Several studies leverage this geometry to analyze air-to-ground communication links. For example, the authors of \cite{3} investigated the coverage probability for a ground user connecting to the closest UAV under a mixed mobility policy, while the authors of \cite{6} analyzed a similar scenario with UAVs deployed on the cylinder's top surface and a receiver at its base.
	
	While these works provide valuable insights into UAV-to-ground coverage, the characterization of the distance distribution and resulting coverage between aerial users within a 3D cylinder remains a key challenge. In terms of modeling the spatial distribution of UAVs, point processes are a common tool. The authors in \cite{7} utilized a multi-binomial point process to analyze UAV swarms on a 2D surface. Other research explores different topologies and complex scenarios, such as the truncated octahedron-based 3D cell model in \cite{4} or the coexistence of UAVs with \emph{device-to-device (D2D)} networks in \cite{5}. They either ignored the height of the UAVs and modeled them in 2D, or fixed a UAV in advance to simplify the analysis. Neither of them can truly describe a randomly distributed swarm of UAVs. Motivated by the gaps in existing literature, this paper focuses on deriving the fundamental inter-UAV distance distribution within a 3D cylinder, which is essential for analyzing the performance of aerial-to-aerial communication links.

\section{System Model and Problem Formulation}
\subsection{System Model}
We consider a UAV network, as shown in Fig.~\ref{fig:cylinder_uav_final}, consisting of $N$ nodes (i.e., UAVs) deployed within a three-dimensional cylindrical volume $\mathcal{C} \subset \mathbb{R}^3$ of height $H$ and base radius $R$. Node locations are \emph{independent and identically distributed (i.i.d.)} according to a uniform distribution over the volume of $\mathcal{C}$, forming a \emph{binomial point process (BPP)} that represents a finite and uniformly random deployment.

Each UAV is capable of both transmitting and receiving information, enabling peer-to-peer communication within the network.
In the following, the terms transmitter and receiver refer to the UAV that is currently transmitting or receiving information, respectively.

The wireless channel between two UAVs is characterized by both distance-dependent path loss and small-scale fading.
The received power decays with distance $d$ as $d^{-\alpha}$, where $\alpha > 2$ is the {path loss exponent}. The fading gain $G$ follows a Nakagami-$m$ distribution, $G \sim \Gamma(m,1/m)$, where $m \ge 1$ is the {Nakagami-$m$ parameter}. The case $m=1$ corresponds to Rayleigh fading, while larger $m$ values represent less severe fading.

To capture the key aspects of wireless communication among UAVs, we consider the \textbf{fading-aware aggregate interference model}.  
At any receiver within the network, interference arises from concurrent transmissions by other UAVs that are active on the same channel.  
Assuming identical transmit power and independent fading across links, the aggregate interference power can be expressed as
\begin{equation}\label{eq:interferencemodel}
I \;=\; \sum_{i \in \Phi_{\mathrm{tx}}\setminus\{s\}} G_i\,U_i^{-\alpha},    
\end{equation}
where $\Phi_\mathrm{tx}$ denotes the set of concurrent transmitters, $U_i$ is the distance to the $i$-th transmitter, $G_i$ is its fading gain, and $s$ indexes the serving transmitter. 

This formulation is also referred to as a \emph{finite-user shot-noise process} widely used in stochastic geometry analyses of wireless networks.

\begin{figure}[t]
\centering
\begin{tikzpicture}[scale=0.5]
  
  \def\R{2.8}
  \def\ry{0.9}
  \def\H{5.0}

  \draw[thick] (-\R,0) arc[start angle=180,end angle=360,
                           x radius=\R,y radius=\ry];

  \draw[thick] (-\R,0)--(-\R,\H);
  \draw[thick] (\R,0)--(\R,\H);

  \draw[thick,fill=blue!6] (0,\H) ellipse[x radius=\R, y radius=\ry];

  \draw (0,\H) circle (0 pt);

  \coordinate (P1) at ( 0.55*\R, 0.78*\H);
  \coordinate (P2) at (-0.60*\R, 0.44*\H);
  \coordinate (P3) at ( 0.00*\R, 0.24*\H); 
  \fill (P1) circle (1.4pt);
  \fill (P2) circle (1.4pt);
  \fill (P3) circle (1.4pt);

  \draw[red,thick] (P3) circle (6pt);
  \node[red,font=\small,anchor=west] at ($(P3)+(0.9,0.3)$) {UAV};
  \draw[->,red,thick,>=Stealth] ($(P3)+(0.8,0.25)$) -- (P3);

  \draw[<->,>=Stealth] (0,\H) -- (\R,\H)
      node[midway,above=-1pt] {$R$};
  \draw[decorate,decoration={brace,raise=6pt}]
       (-\R-0.6,0) -- (-\R-0.6,\H)
       node[midway,left=6pt] {$H$};
  \node at (0,0.56*\H) {$\mathcal{C}$};
\end{tikzpicture}

\caption{A UAV network in a cylindrical volume $\mathcal{C}$ of height $H$ and radius $R$.}
\label{fig:cylinder_uav_final}
\end{figure}
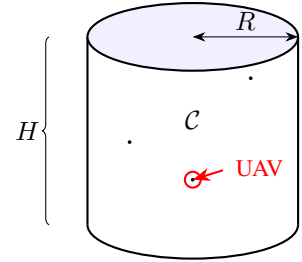

\subsection{Problem Formulation}
Based on the system model above, we evaluate the link reliability from the perspective of a typical receiver.
To formulate the analysis, we fix an arbitrary node as the typical receiver; because of the statistical symmetry, this node is representative of any receiver in the network.


For analytical tractability, we focus on the \emph{fully loaded} case in which all other UAVs transmit concurrently, i.e., $\Phi_{\mathrm{tx}}$ contains all nodes except the receiver, but the results can be easily extended to arbitrary $\Phi_{\mathrm{tx}} \subseteq [N]$.
This worst-case interference assumption yields a conservative (lower-bound) characterization of reliability.

Under identical transmit power and omnidirectional antennas, the receiver associates with the transmitter that maximizes the average received power, which is equivalent to the nearest node. 
Let $L_s$ be the resulting serving distance and $G_s$ the serving-link fading gain.

A key performance indicator in interference-limited networks is the \emph{coverage probability}, which quantifies the likelihood that the SIR (i.e., $\text{SIR} := \frac{G_s\,L_s^{-\alpha}}{I}$) at the typical receiver exceeds the decoding threshold $\beta$:
$$
P_c \;=\; \mathbb{P}\!\left(\frac{G_s\,L_s^{-\alpha}}{I} > \beta \right).
$$
Our objective is to derive a closed-form expression for $P_c$ as a function of the system parameters $(N, R, H, \alpha, m, \beta)$.

\section{Coverage Probability Analysis}
In this section, we derive an analytical expression for the coverage probability $P_c$.
Our derivation proceeds in four steps: we first characterize the distance distribution between UAVs, then obtain the distribution of the serving link, model the interference through its Laplace transform, and finally combine these results to evaluate the overall coverage probability.

\subsection{Distance Distribution between Any Two Nodes}
To capture the spatial relationship between UAVs uniformly deployed in the cylindrical region $\mathcal{C}$,
one of our key contributions is the \emph{probability density function (PDF)} of the Euclidean distance $L$ between any two randomly selected nodes, as shown in Theorem~\ref{thm:d_dist}.

\begin{theorem}\label{thm:d_dist}
Consider two nodes drawn independently and uniformly from a cylinder with base radius $R$ and height $H$. Let $L$ denote their Euclidean distance, and let $l$ be a realization of $L$.
Denote by $D_{xy}$ and $D_z$ their planar and vertical separations, respectively, and define the squared variables
$r = D_{xy}^2$ and $z = D_z^2$. 
Then, the PDF of $L$ is
\begin{equation*}
	f_L(l) = 2l \int_{0}^{l^2}  f_{XY}(r)\, f_Z(l^2 - r)\, dr,
\end{equation*}
where $f_{XY}(r)$ and $f_Z(z)$ represent the PDFs of the squared horizontal and vertical distance components, respectively, given by
\begin{equation*}
	f_{XY}(r) = \frac{f_{D_{xy}}(\sqrt{r})}{2\sqrt{r}}, 
	\qquad
	f_Z(z) = \frac{H - \sqrt{z}}{H^2 \sqrt{z}},
\end{equation*}
for $0 \le r \le 4R^2$ and $0 \le z \le H^2$.
\end{theorem}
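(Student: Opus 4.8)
The plan is to exploit the product structure of the uniform law on a cylinder, which renders the horizontal and vertical separations independent, and then assemble $f_L$ as a convolution of the two squared-component densities followed by a square-root change of variables.

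First I would write $\mathcal{C} = \mathcal{D}_R \times [0,H]$, where $\mathcal{D}_R$ is the disk of radius $R$, so that a point uniform on $\mathcal{C}$ has the form $(\mathbf{X}, Z)$ with $\mathbf{X}$ uniform on $\mathcal{D}_R$, $Z$ uniform on $[0,H]$, and $\mathbf{X} \perp Z$. For the two sampled nodes I write $(\mathbf{X}_1, Z_1)$ and $(\mathbf{X}_2, Z_2)$, mutually independent. Setting $D_{xy} = \|\mathbf{X}_1 - \mathbf{X}_2\|$ and $D_z = |Z_1 - Z_2|$, we have $L^2 = D_{xy}^2 + D_z^2$. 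Since $D_{xy}$ is a measurable function of $(\mathbf{X}_1, \mathbf{X}_2)$ alone and $D_z$ of $(Z_1, Z_2)$ alone, and these two pairs are independent, $D_{xy}$ and $D_z$ — hence $r = D_{xy}^2$ and $z = D_z^2$ — are independent. This independence is the step I expect to carry the conceptual weight; everything afterward is change-of-variables bookkeeping.

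Next I would compute the two marginals. For the vertical part, $Z_1 - Z_2$ has the triangular density on $[-H,H]$, so $D_z = |Z_1 - Z_2|$ has density $2(H-t)/H^2$ on $[0,H]$; the monotone substitution $z = t^2$ (Jacobian $1/(2\sqrt{z})$) yields $f_Z(z) = (H - \sqrt{z})/(H^2 \sqrt{z})$ on $[0,H^2]$, matching the claim. For the horizontal part, $D_{xy}$ is the distance between two uniform points in a disk of radius $R$, whose density $f_{D_{xy}}$ is the classical disk line-picking density supported on $[0,2R]$ (recalled or derived separately); the same substitution $r = d^2$ gives $f_{XY}(r) = f_{D_{xy}}(\sqrt{r})/(2\sqrt{r})$ on $[0,4R^2]$.

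Finally I would combine the pieces. Because $r$ and $z$ are independent and $L^2 = r + z$, the density of $S := L^2$ is the convolution $f_S(s) = \int_0^s f_{XY}(u)\, f_Z(s-u)\, du$, where the factors vanish outside $[0,4R^2]$ and $[0,H^2]$ respectively, so the effective integration range is automatically truncated and no explicit case split on $s$ versus $4R^2+H^2$ is needed in the formula. Then $L = \sqrt{S}$ is a monotone transform with $dS/dL = 2l$, giving $f_L(l) = 2l\, f_S(l^2) = 2l \int_0^{l^2} f_{XY}(r)\, f_Z(l^2 - r)\, dr$, as stated. The only points requiring care are the support bookkeeping inside the convolution and the validity of the square-root change of variables, both routine once the independence of $D_{xy}$ and $D_z$ has been established.
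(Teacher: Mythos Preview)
Your proposal is correct and follows essentially the same route as the paper: establish independence of $D_{xy}$ and $D_z$ from the product structure of the cylinder, pass to the squared variables, convolve to obtain the density of $L^2$, and finish with the square-root change of variables. The paper's proof sketch is terser (it cites Lemmas~\ref{lm:disk} and~\ref{lm:segment} for the two marginal densities rather than deriving them inline), but the logical skeleton is identical.
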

	\begin{proof}[Proof Sketch]
		Due to the uniform distribution of points, $D_{xy}$ and $D_z$ are statistically independent random variables. In lemma~\ref{lm:disk} and \ref{lm:segment}, respectively, we know their respective PDFs, $f_{D_{xy}}(r)$ and $f_{D_z}(z)$.
		
		To find the PDF of $L$, we first consider the squared distances $D_{xy}^2$ and $D_z^2$. Since $D_{xy}$ and $D_z$ are independent, the PDF of their sum, $L^2 = D_{xy}^2+D_z^2$, can be obtained by the convolution of their individual PDFs. The final PDF of the distance $L$, $f_L(l)$, is then found through a variable transformation from the PDF of $L^2$. For detailed calculation information, please refer to the appendix.
	\end{proof}

Theorem~\ref{thm:d_dist} represents the overall distance distribution as a convolution of the independent planar and vertical components, forming the geometric foundation for the subsequent analysis.

The corresponding \emph{cumulative distribution function (CDF)} is defined by the integral: $F_L(l):= \mathbb{P}(L \leq l) = \int_0^l f_L(x) \, dx$.
Due to the complexity of $f_L(l)$, this integral generally does not have a simple closed-form solution and must be evaluated numerically.

\subsection{Distance Distribution from the Serving Node}
	We first characterize the distribution of the service distance $L_s$.
    Recall that $L_s$ is the minimum among the $N-1$ distances from the typical receiver to the transmitters, i.e., $ L_s = \min\{L_1, L_2, \dots, L_{N-1}\} $, where $L_i$ is the distance to the $i$-th transmitter. The set $\{L_i\}_{i=1}^{N-1}$ consists of i.i.d. random variables, each with PDF $f_L(l)$ and CDF $F_L(l)$. 
Using order statistics, we derive the PDF of $L_s$ as shown in Lemma~\ref{lm:ls}, denoted by $f_{L_s}$, based on the fading-aware aggregate interference model described in (\ref{eq:interferencemodel}).
	\begin{lemma}\label{lm:ls}
		The PDF $f_{L_s}(l)$ is given by 
        $$f_{L_s}(l) = (N-1) \left(1 - F_L(l)\right)^{N-2} f_L(l).$$
	\end{lemma}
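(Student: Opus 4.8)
The plan is to prove Lemma~\ref{lm:ls} by recognizing $L_s$ as the first order statistic (the minimum) of $N-1$ i.i.d.\ random variables, and then invoking the standard order-statistic density formula. First I would recall that, by assumption, the distances $L_1,\dots,L_{N-1}$ from the typical receiver to the $N-1$ potential transmitters are i.i.d.\ with common PDF $f_L$ and CDF $F_L$, as established by Theorem~\ref{thm:d_dist} together with the independence of node placements in the BPP. The key identity is $\{L_s > l\} = \bigcap_{i=1}^{N-1}\{L_i > l\}$, which holds because the minimum exceeds $l$ if and only if every term does.

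Next I would compute the complementary CDF of $L_s$. Using independence, $\mathbb{P}(L_s > l) = \prod_{i=1}^{N-1}\mathbb{P}(L_i > l) = (1 - F_L(l))^{N-1}$, so the CDF is $F_{L_s}(l) = 1 - (1 - F_L(l))^{N-1}$. Differentiating with respect to $l$ and applying the chain rule, using $\frac{d}{dl}F_L(l) = f_L(l)$, gives
$$
f_{L_s}(l) = \frac{d}{dl}F_{L_s}(l) = (N-1)\,(1 - F_L(l))^{N-2}\,f_L(l),
$$
which is exactly the claimed expression. One should note that $f_L$ is continuous (it is an integral of a bounded integrand against $l$), so the differentiation is justified on the support.

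There is essentially no substantial obstacle here: the argument is a textbook application of order statistics, and the only thing to be careful about is bookkeeping — confirming that the relevant distances are genuinely i.i.d.\ (which follows from the i.i.d.\ uniform placement of the nodes and the fact that the typical receiver is a fixed reference node, so each $L_i$ has the distribution characterized in Theorem~\ref{thm:d_dist}), and confirming the support of $f_{L_s}$ is $[0, \sqrt{4R^2 + H^2}]$, inherited from the support of $L$. If one wanted to be maximally careful, one could alternatively derive the density directly via $f_{L_s}(l)\,dl = \mathbb{P}(L_s \in [l, l+dl])$ by a union-bound/inclusion argument over which index attains the minimum, but the CDF-differentiation route above is cleaner and entirely rigorous.
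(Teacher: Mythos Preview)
Your proposal is correct and follows essentially the same approach as the paper: compute the complementary CDF $\mathbb{P}(L_s>l)=(1-F_L(l))^{N-1}$ using the i.i.d.\ assumption, then differentiate. Your write-up is slightly more careful about justifying the differentiation and naming the support, but the argument is identical in substance.
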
 
	\begin{proof}
		The event $\{L_s > l\}$ occurs if and only if all $N-1$ distances $L_i$ are greater than $l$.
		\begin{equation*}
        \begin{aligned}
			\mathbb{P}(L_s > l) &= \mathbb{P}(L_1 > l, \dots, L_{N-1} > l) \\
            & = \prod_{i=1}^{N-1} \mathbb{P}(L_i > l)  = \left(1 - F_L(l)\right)^{N-1},
        \end{aligned}
		\end{equation*}
		where the last two equalities are due to the distances $L_i$ being i.i.d. The lemma follows by taking derivatives of the above equation.
		

	\end{proof}

\subsection{Modeling the Aggregated Interference}
A key challenge in modeling the interference lies in accurately capturing the cumulative effect of multiple transmitters, 
each subject to independent fading and located at random distances in a three-dimensional space. 
Unlike simplified two-dimensional network models or those assuming infinite spatial domains, 
our finite cylindrical deployment introduces spatial boundary effects that complicate analytical tractability.  
Moreover, the interference terms are coupled through both distance-dependent path loss and random fading, 
making the exact distribution of the aggregate interference intractable.

To overcome these challenges, we employ the \emph{Laplace transform} of the aggregate interference power, 
which provides a tractable yet precise characterization of its distribution.  
This transform approach effectively captures the combined effects of the spatial geometry, path loss, and Nakagami-$m$ fading 
in a compact analytical expression—constituting one of the key steps in our framework.

Let the interference distance $U$ refer to the distance from the receiver to any one of the interfering (i.e., non-serving) transmitters, and denote by $f_{U|L_s}(u|l) $ the conditional PDF of $U$, given a service distance instantiation $L_s=l$. Then, given our proposed Lemma~\ref{lm:condistance} (see Appendix), the interference result is summarized in the following lemma.
	
	\begin{lemma}
		The Laplace Transform of the interference can be written as
		\begin{equation*}
			\mathcal{L}_{I}(t|l) = \left[ \int_{l}^{\infty} \left(1 + \frac{t u^{-\alpha}}{m}\right)^{-m} f_{U|L_s}(u|l) \, du \right]^{N-2},
		\end{equation*}
	\end{lemma}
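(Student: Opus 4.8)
The plan is to compute $\mathcal{L}_I(t\mid l) = \mathbb{E}\!\left[e^{-tI}\mid L_s=l\right]$ directly from the interference model in~\eqref{eq:interferencemodel}. Conditioned on $\{L_s=l\}$, the $N-2$ non-serving transmitters (all nodes other than the receiver and the serving node) contribute $I = \sum_i G_i U_i^{-\alpha}$. The key structural facts I would invoke are: (i) conditioned on $L_s=l$, the interfering distances $U_i$ are i.i.d. with common conditional PDF $f_{U\mid L_s}(u\mid l)$ (this is exactly the content of the cited Lemma~\ref{lm:condistance}), and the fading gains $G_i$ are i.i.d. $\Gamma(m,1/m)$ independent of everything else; and (ii) because the summands are conditionally independent, the expectation of the product factorizes into a product of $N-2$ identical single-term expectations.

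First I would write
$$
\mathcal{L}_I(t\mid l) = \mathbb{E}\!\left[\prod_{i} e^{-t G_i U_i^{-\alpha}}\,\Big|\,L_s=l\right] = \left(\mathbb{E}\!\left[e^{-t G U^{-\alpha}}\,\big|\,L_s=l\right]\right)^{N-2},
$$
using conditional i.i.d.\ structure to pull the product out of the expectation. Next I would condition further on $U=u$ and take the expectation over the fading gain $G\sim\Gamma(m,1/m)$: the moment generating function of a Gamma random variable gives $\mathbb{E}[e^{-sG}] = (1+s/m)^{-m}$, so with $s = t u^{-\alpha}$ we get $\mathbb{E}[e^{-tGu^{-\alpha}}] = \bigl(1 + t u^{-\alpha}/m\bigr)^{-m}$. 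Finally I would average this over $u$ against the conditional density $f_{U\mid L_s}(u\mid l)$, integrating over $u \in [l,\infty)$ — the lower limit $l$ reflecting that every interfering node is at least as far as the nearest (serving) node — which yields the bracketed integral, and raising to the power $N-2$ completes the expression.

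The routine parts are the Gamma MGF and the factorization; the genuine subtlety — and the reason Lemma~\ref{lm:condistance} is needed — is justifying that, conditioned on $L_s=l$, the interfering distances are i.i.d.\ with a clean density supported on $[l,\infty)$. This is not immediate for a BPP: the $N-1$ unordered distances from the receiver are i.i.d.\ with density $f_L$, but conditioning on the \emph{minimum} equaling $l$ makes the remaining $N-2$ distances i.i.d.\ truncated to $(l,\infty)$ with density $f_L(u)/(1-F_L(l))$; one must also check independence from the fading gains, which holds since fading is drawn independently of node positions. I would therefore state that $f_{U\mid L_s}(u\mid l)$ is precisely this truncated density (the appendix lemma), substitute it into the bracket, and note that the $N-2$ exponent matches the number of interferers in the fully-loaded regime. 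The main obstacle is thus purely the conditional-distribution bookkeeping; once Lemma~\ref{lm:condistance} is in hand, the Laplace transform derivation is a short chain of independence and MGF manipulations.
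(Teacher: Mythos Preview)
Your proposal is correct and follows essentially the same route as the paper: factorize the Laplace transform over the $N-2$ conditionally i.i.d.\ interferers, evaluate the inner expectation over the Gamma fading via its MGF, and then integrate against the truncated distance density from Lemma~\ref{lm:condistance}. The paper's proof is slightly terser on the conditional-independence bookkeeping, but the argument is the same.
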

	\begin{proof}
		The total interference is the sum of $N-2$ i.i.d. power components. Therefore, its Laplace transform is the product of the individual transforms, raised to the power of $N-2$.
		\begin{equation*}
			\mathcal{L}_{I}(t|l) = \left( \mathbb{E}_{U, G}\left[ e^{-t G U^{-\alpha}} | L_s=l \right] \right)^{N-2}.
		\end{equation*}
		First, we average over the interference channel gain $G$ (with parameters $m, 1/m$), which is equivalent to finding the \emph{moment generating function (MGF)} of a Gamma distribution:
		\begin{equation*}
			\mathbb{E}_{G}\left[ e^{-t G U^{-\alpha}} \right] = \left(1 + \frac{t U^{-\alpha}}{m}\right)^{-m}.
		\end{equation*}
		Next, we average over the random interference distance $U$ by integrating against its conditional PDF:
		\begin{equation*}
			\mathcal{L}_{I}(t|l) = \left[ \int_{l}^{\infty} \left(1 + \frac{t u^{-\alpha}}{m}\right)^{-m} f_{U|L_s}(u|l) \, du \right]^{N-2}.
		\end{equation*}
		\end{proof}

		\subsection{Deriving the Coverage Probability}
We now combine these results to derive the coverage probability of the typical receiver, as shown in Theorem~\ref{thm:main}.
		\begin{theorem}\label{thm:main}
			The coverage probability is
            \begin{equation*}
			\begin{aligned}
				P_c=&\int_0^{\sqrt{4R^2+H^2}}{\sum_{k=0}^{m-1}{\frac{(-t)^k}{k!}\left[ \frac{\partial ^k}{\partial t^k}\mathcal{L} _I(t|l) \right] _{t=m\beta l^{\alpha}}}} \nonumber
				\\
				&\times (N-1)\left( 1-F_L(l) \right) ^{N-2}f_L(l)\,dl.
			\end{aligned}                
            \end{equation*}
		\end{theorem}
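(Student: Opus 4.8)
The plan is to strip away the sources of randomness one layer at a time with the tower law, exploiting the three independence structures of the model: the serving fading $G_s$ is independent of the node positions and of the interference, the serving distance $L_s$ fixes the support of every interferer distance, and the interferer fadings are i.i.d.\ across links. First I would condition on $L_s$. Writing $\mathrm{SIR} = G_s L_s^{-\alpha}/I$,
\begin{equation*}
P_c = \mathbb{E}_{L_s}\big[\,\mathbb{P}(\mathrm{SIR} > \beta \mid L_s)\,\big] = \int_{0}^{\sqrt{4R^2 + H^2}} \mathbb{P}\big(G_s > \beta l^{\alpha} I \mid L_s = l\big)\, f_{L_s}(l)\, dl ,
\end{equation*}
where the upper limit is the diameter of the cylinder (the essential supremum of $L$) and $f_{L_s}$ is furnished by Lemma~\ref{lm:ls}.

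Next I would evaluate the conditional coverage term. Since $G_s$ is independent of both $L_s$ and $I$, I condition additionally on $I$ and use the complementary CDF of a Gamma variable of integer shape $m$ (Erlang form), $\mathbb{P}(G_s > x) = e^{-mx}\sum_{k=0}^{m-1}(mx)^k/k!$, evaluated at $x = \beta l^{\alpha} I$. Taking the expectation over $I$ given $L_s = l$ and pulling the finite sum out of the expectation gives
\begin{equation*}
\mathbb{P}\big(G_s > \beta l^{\alpha} I \mid L_s = l\big) = \sum_{k=0}^{m-1} \frac{(m\beta l^{\alpha})^k}{k!}\, \mathbb{E}\big[\, I^{k} e^{-m\beta l^{\alpha} I} \mid L_s = l \,\big].
\end{equation*}

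Then I would recognize each term $\mathbb{E}[\, I^{k} e^{-tI} \mid L_s=l\,]$ as a derivative of the conditional Laplace transform via the identity $\mathbb{E}[I^{k} e^{-tI}] = (-1)^{k}\,\partial_t^{k}\mathcal{L}_I(t)$. Evaluating at $t = m\beta l^{\alpha}$ and absorbing $(-1)^k$ into $t^k$, the conditional coverage collapses to $\sum_{k=0}^{m-1}\frac{(-t)^k}{k!}\big[\partial_t^{k}\mathcal{L}_I(t|l)\big]_{t = m\beta l^{\alpha}}$, with $\mathcal{L}_I(t|l)$ the expression supplied by the preceding lemma in terms of $f_{U|L_s}(u|l)$. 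Substituting this together with $f_{L_s}(l) = (N-1)(1-F_L(l))^{N-2} f_L(l)$ into the outer integral yields exactly the claimed formula.

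The step I expect to be the main obstacle is the conditioning that produces $\mathcal{L}_I(t|l)$: one must show that, conditioned on $L_s = l$, the $N-2$ non-serving distances are i.i.d.\ with the truncated density $f_{U|L_s}(u|l) = f_L(u)/(1-F_L(l))$ on $(l,\infty)$, which is where the BPP independence and the order-statistics structure enter through Lemma~\ref{lm:condistance}, and which is also what makes the conditional Laplace transform a genuine product of $N-2$ identical factors. Two lighter points still need care: justifying the interchange of differentiation and expectation (the Laplace transform $\mathcal{L}_I(t|l)$ is analytic in $t>0$, so differentiation under the expectation is valid on compact subintervals of $(0,\infty)$, and in particular at $t=m\beta l^{\alpha}>0$), and confirming the independence of $G_s$ from the aggregate $I$, which is immediate from independent fading across distinct links.
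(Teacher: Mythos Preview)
Your proposal is correct and follows essentially the same route as the paper: condition on $L_s$, use the Erlang complementary CDF of $G_s$ to reduce the conditional coverage to moments of $e^{-tI}$, rewrite those as derivatives of $\mathcal{L}_I(t|l)$, and integrate against $f_{L_s}$ from Lemma~\ref{lm:ls}. Your version is in fact slightly more careful than the paper's, since you make explicit the upper integration limit as the cylinder's diameter, flag the justification for differentiating under the expectation, and note the independence assumptions on $G_s$ that the paper uses implicitly.
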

		\begin{proof}
			The main idea is to first calculate the conditional coverage probability given a fixed service distance $L_s=l$, i.e., $\mathbb{P}(\text{SIR} > \beta | L_s=l)$. Then, according to the tower law and the fact that $\{L_s < 0\}$ has a zero measure, we have that
			\begin{equation*}
				P_c = \int_{0}^{\infty} \mathbb{P}(\text{SIR} > \beta | L_s=l) \cdot f_{L_s}(l) \, dl.
			\end{equation*} 
			By the definition of SIR, we have
			\begin{align*}
				\mathbb{P}(\text{SIR} > \beta | L_s=l) &= \mathbb{P}\left(\frac{G_s l^{-\alpha}}{I} > \beta | L_s=l \right)\\
				&= \mathbb{P}(G_s > \beta l^\alpha I | L_s=l) ,
			\end{align*}
			where $I = \sum_{i=1}^{N-2} G_i U_i^{-\alpha}$ is the total interference power.
			By the tower law over $I$, we have that
			\begin{equation*}
				\mathbb{P}(\text{SIR} > \beta | L_s=l) = \mathbb{E}_I\left[ \mathbb{P}(G_s > \beta l^\alpha I | L_s=l, I) \right] ,
			\end{equation*}
			where $\mathbb{P}(G_s > x)$ is the complementary CDF of a Gamma-distributed random variable. For integer values of $m$, it has the series expansion:
			\begin{equation*}
				\mathbb{P}(G_s > x) = e^{-m x} \sum_{k=0}^{m-1} \frac{(m x)^k}{k!}.
			\end{equation*} 
			
			By substituting $x = \beta l^\alpha I$ and manipulating the expression, the conditional coverage probability can be elegantly expressed in terms of the Laplace transform of the interference, $\mathcal{L}_I(t|l) = \mathbb{E}_I[e^{-tI}|L_s=l]$, and its derivatives. Using the property $\mathbb{E}[I^k e^{-tI} | l] = (-1)^k \frac{d^k}{dt^k} \mathcal{L}_I(t|l)$, we get:
			\begin{equation*}
				\mathbb{P}(\text{SIR} > \beta | L_s=l) = \sum_{k=0}^{m-1}\frac{(-t)^{k}}{k!}\left[\frac{\partial^{k}}{\partial t^{k}}\mathcal{L}_{I}(t|l)\right] _{t=m\beta l^{\alpha}}.
			\end{equation*}

			The theorem then follows by substituting all derived components back into the main integral.
		\end{proof}
		Theorem \ref{thm:main} provides an exact analytical expression for the coverage probability without relying on asymptotic or Poisson approximations. Unlike prior studies that offer approximate or simulation-based results, this formulation fully captures the joint effects of network geometry, finite node density, and Nakagami-$m$ fading. Hence, it represents the \textbf{first precise theoretical characterization} of coverage performance in the literature.
		
		\section{Simulation Results}
		In this section, we present numerical results to validate our theoretical framework and analyze the network performance. We first validate the derived PDF of the distance distribution, then analyze the coverage probability under various conditions, and finally demonstrate the superior accuracy of our model through a comparative analysis with the conventional \emph{Poisson point process~(PPP)} model.
		
		\subsection{Validation of the Distance Distribution}
One cornerstone of our analytical model is the accurate characterization of the Euclidean distance between two randomly located nodes within the cylinder. To validate the derived PDF, we perform Monte Carlo simulations by generating numerous random point pairs and computing the empirical distance distribution.

Figs.~\ref{fig:pdf_validation_flat} and~\ref{fig:pdf_validation_tall} compare the simulated histograms with the theoretical PDF for two representative geometries: a ``squat and short'' cylinder ($R=120, H=20$) and a ``slender and tall'' cylinder ($R=20, H=120$). In both cases, the analytical curves exhibit near-perfect agreement with the simulation results, confirming the correctness of the derivation and establishing a solid foundation for subsequent performance analysis.
		
\begin{figure}[htbp]
    \centering
    \begin{subfigure}[b]{0.8\linewidth}
        \centering
        \includegraphics[width=\linewidth]{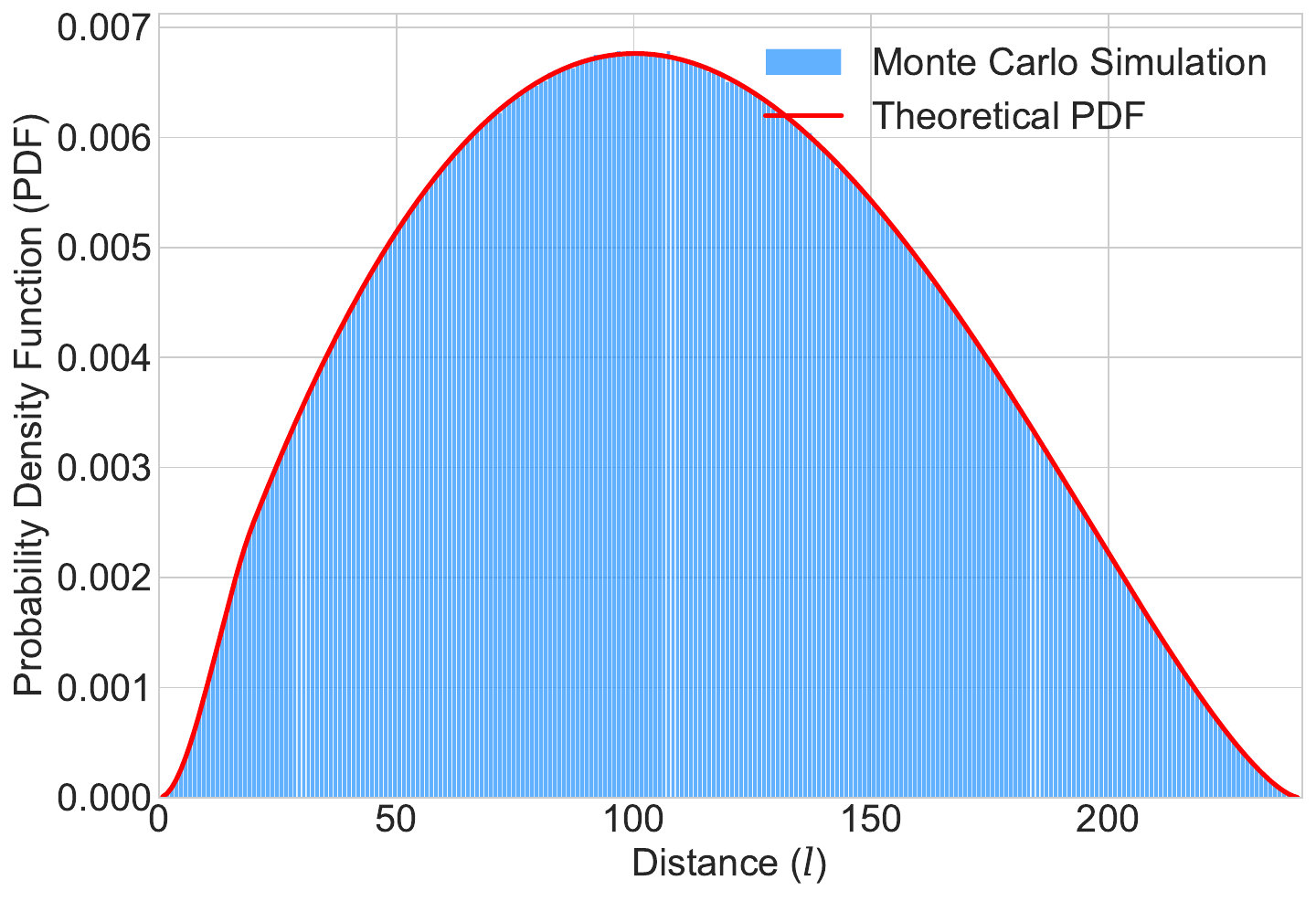}
        \caption{``Squat'' cylinder ($R=120$, $H=20$).}
        \label{fig:pdf_validation_flat}
    \end{subfigure}
    \hfill
    \begin{subfigure}[b]{0.8\linewidth}
        \centering
        \includegraphics[width=\linewidth]{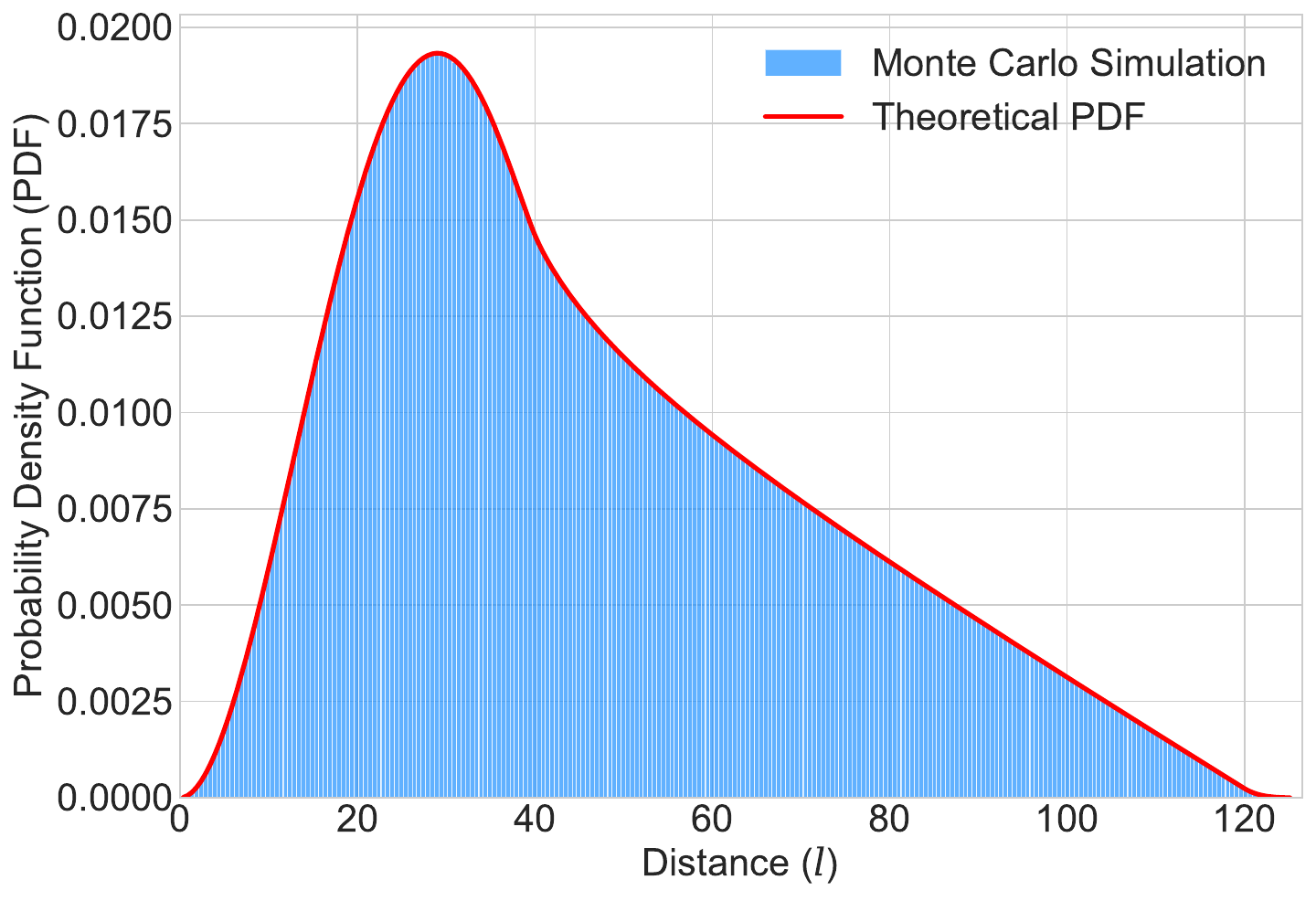}
        \caption{``Slender'' cylinder ($R=20$, $H=120$).}
        \label{fig:pdf_validation_tall}
    \end{subfigure}
    \caption{Validation of the theoretical PDF of the inter-node distance against Monte Carlo simulation results for two representative cylindrical geometries. The analytical curves show near-perfect agreement with simulation data.}
    \label{fig:pdf_validation_combined}
\end{figure}
		
		\subsection{Coverage Probability Analysis}
		First, we analyze the impact of the cylinder's aspect ratio on performance. Fig.~\ref{fig:coverage_vs_height} plots the coverage probability as a function of the number of UAVs for cylinders with a fixed radius $R$ but varying heights $H$. Each curve corresponds to a different cylinder height. The results clearly show that for a fixed number of nodes, the coverage probability degrades as the cylinder height increases. This is because a larger volume leads to greater average inter-node distances, which in turn weakens the received signal power and reduces the likelihood of successful communication links.
		
		\begin{figure}[htbp]
			\centering
			\includegraphics[width=0.8\linewidth]{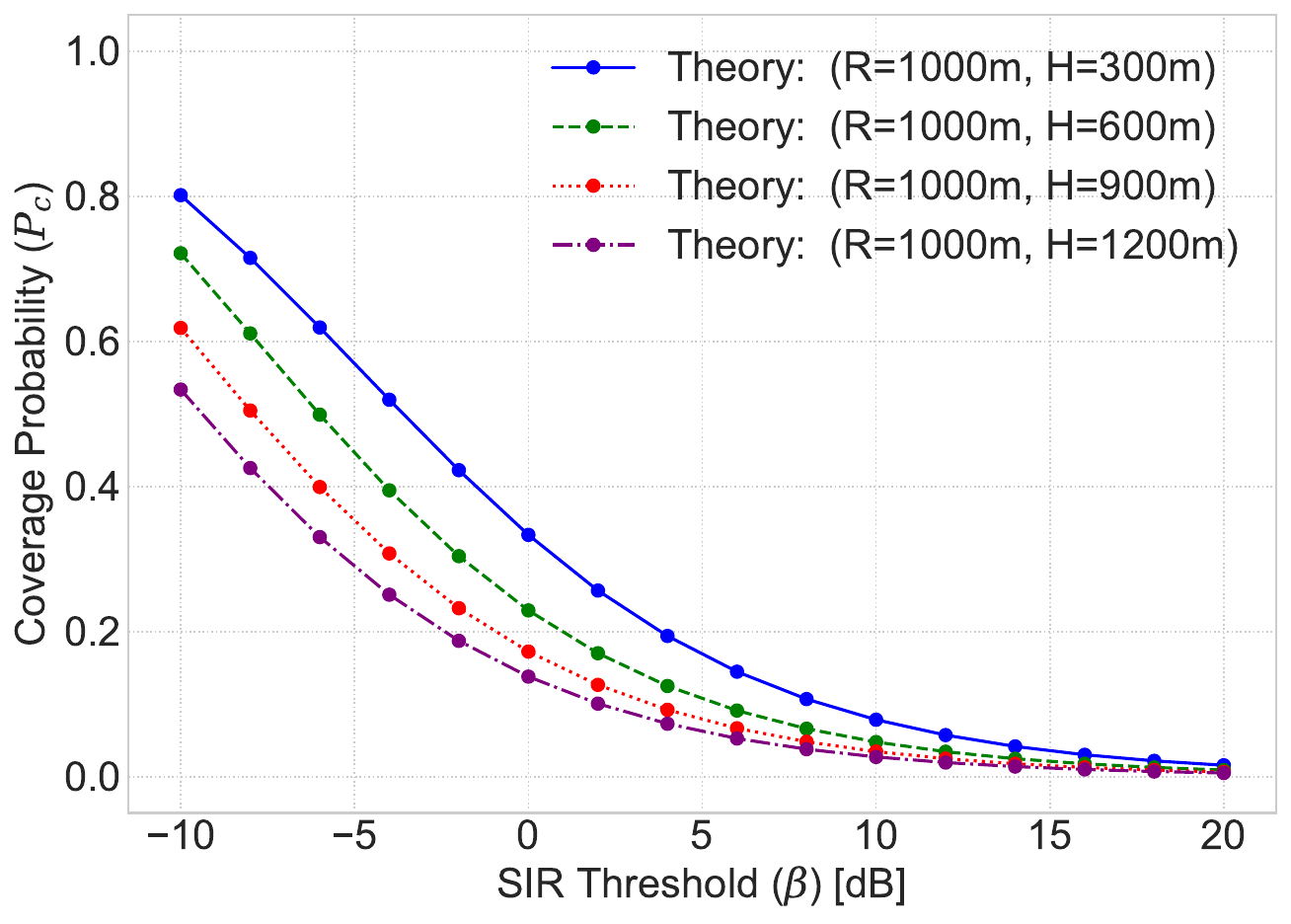}
			\caption{Coverage probability as a function of the number of UAVs for cylinders with varying heights.}
			\label{fig:coverage_vs_height}
		\end{figure}
		
		Next, we examine the influence of channel conditions. Fig.~\ref{fig:coverage_vs_fading} illustrates the coverage probability under different Nakagami-$m$ fading scenarios, where the parameter $m$ quantifies the fading severity. The case $m=1$ corresponds to Rayleigh fading (most severe), while larger values of $m$ represent less severe fading conditions. The plot indicates a significant improvement in coverage probability as $m$ increases from 1 to 3. This result aligns with theoretical expectations, confirming that more favorable channel conditions lead to a more robust and reliable network performance.
		
		\begin{figure}[htbp]
			\centering
			\includegraphics[width=0.8\linewidth]{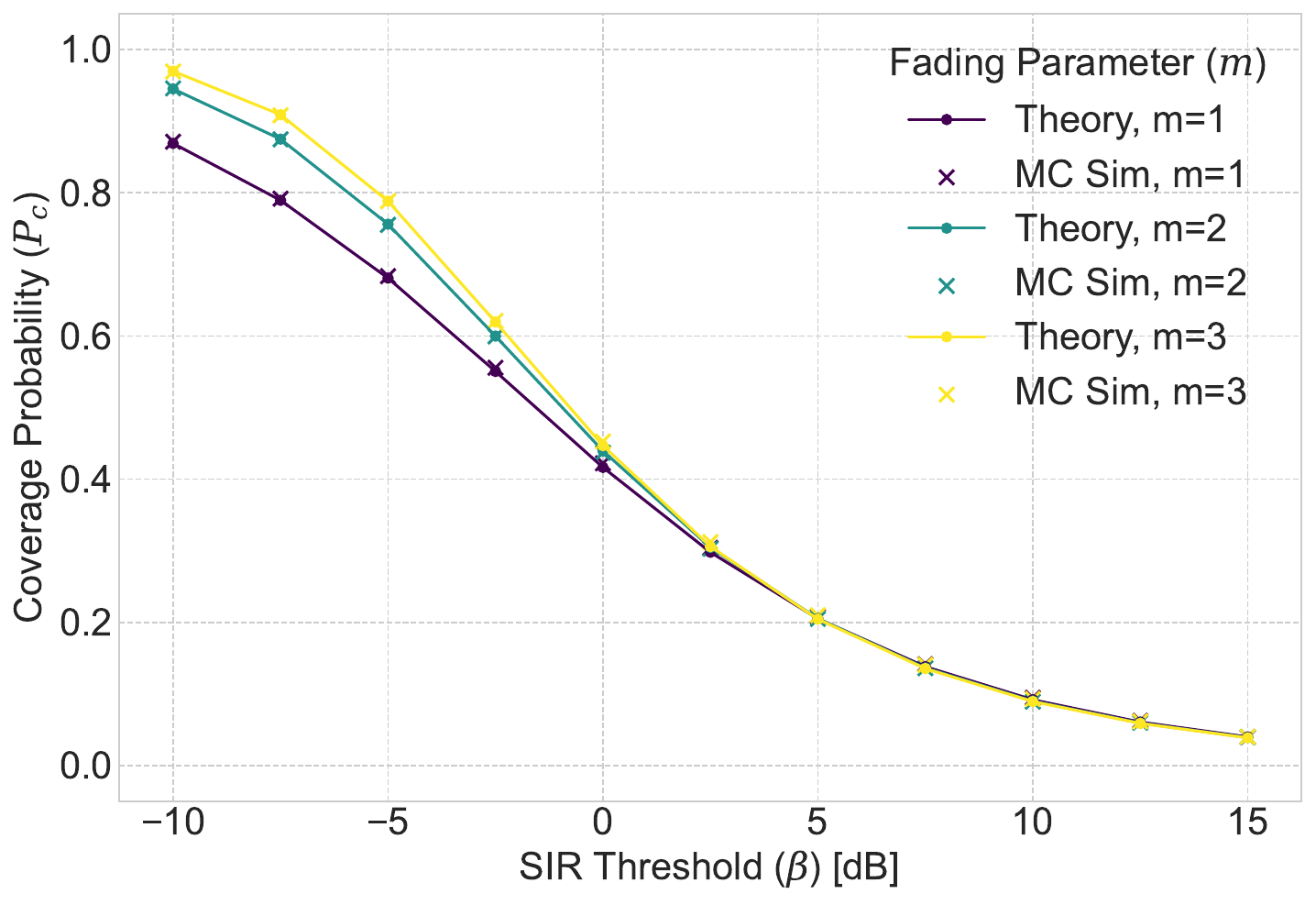}
			\caption{The impact of the Nakagami-$m$ fading parameter on network coverage probability.}
			\label{fig:coverage_vs_fading}
		\end{figure}

		\subsection{Comparative Analysis with the PPP Model}
		
		A primary contribution of this work is the development of a model that more accurately reflects a real-world deployment than the widely used PPP approximation. Our BPP-based model inherently accounts for two critical physical realities: a fixed number of nodes ($N$) and the presence of hard boundaries in a finite volume. In contrast, the PPP model assumes an infinite number of nodes distributed over an infinite space, thus neglecting boundary effects.
		
		To demonstrate the importance of these distinctions, Fig.~\ref{fig:comparison_ppp} presents a direct comparison between our proposed model, the PPP approximation, and the ground truth obtained from Monte Carlo simulation for the specified cylindrical network. The simulation result (blue curve) serves as the accurate benchmark. The analysis reveals two key observations:
		\begin{enumerate}
			\item Our proposed BPP-based theoretical model (red curve) perfectly aligns with the simulation results across the entire range of parameters, validating its high fidelity.
			\item The PPP-based approximation (green curve) exhibits a significant deviation from the ground truth. This discrepancy arises because the PPP model fails to capture the reduced interference experienced by nodes near the boundary (the boundary effect), leading to an inaccurate prediction of network performance.
		\end{enumerate}
		
		This comparison unequivocally demonstrates that for analyzing performance in finite, boundary-constrained scenarios, which are typical of UAV deployments, our BPP-based model offers substantially higher accuracy than the conventional PPP approximation.
		
		\begin{figure}[htbp]
			\centering
			\includegraphics[width=0.8\linewidth]{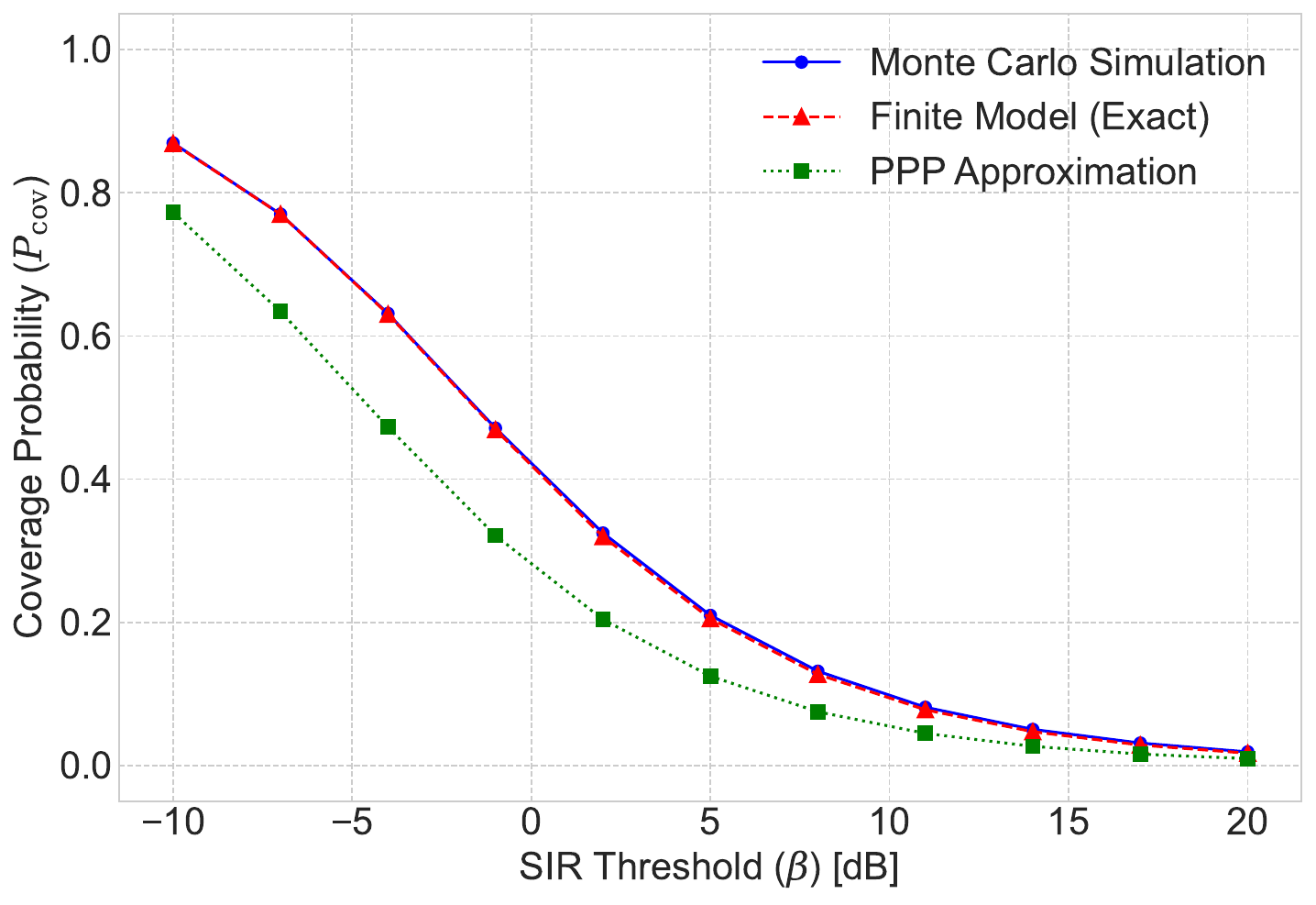}
			\caption{Comparison of coverage probability: Proposed BPP model vs. PPP approximation vs. Monte Carlo simulation (ground truth).}
			\label{fig:comparison_ppp}
		\end{figure}

		\section{CONCLUSION}
This paper resolves a long-standing open problem in stochastic geometry by deriving the exact probability density function of the Euclidean distance between two uniformly random points within a finite 3D cylinder. Building on this result, we established a unified analytical framework for evaluating the coverage probability of finite 3D wireless networks modeled by a BPP, incorporating path loss and Nakagami-$m$ fading. The framework overcomes the mathematical intractability that has long hindered finite-node analysis by leveraging independence structure, convolution geometry, and Laplace-transform derivatives, resulting in a formulation that is both mathematically exact and computationally efficient. Extensive Monte Carlo simulations confirm the precision of the theory and demonstrate substantial accuracy gains over traditional Poisson-based models.

Future work can build upon this foundation in several directions. Possible extensions include analyzing connectivity, capacity, and energy efficiency under the same geometric model, and exploring non-uniform node distributions, altitude-dependent channels, and multi-cylinder deployments. Applying this exact geometry to optimize UAV placement, trajectory design, or swarm coordination also represents a promising avenue for advancing real-world networked autonomous systems.


\onecolumn
\appendix 
The following two well-established geometric results from~\cite{8} are used as building blocks in our derivations. 
They provide the exact distance distributions between two uniformly random points within a finite disk 
and along a finite vertical line segment, respectively.
	\begin{lemma}[Theorem 2.3.13 in \cite{8}]\label{lm:disk}
		The distance distribution PDF of two uniformly random points in a finite disk of radius R is given as 
		\begin{equation}
			f_{D_{xy}}(v) = \frac{4v}{\pi R^2} \left( \arccos\left(\frac{v}{2R}\right) - \frac{v}{2R} \sqrt{1 - \left(\frac{v}{2R}\right)^2} \right),
		\end{equation}
		for $0 \le v \le 2R $.
	\end{lemma}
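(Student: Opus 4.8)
The plan is to prove Lemma~\ref{lm:disk}, the distance distribution between two uniformly random points in a disk of radius $R$, via the classical kinematic argument based on the area of the intersection of two disks. First I would fix one point $X_1$ uniformly in the disk $\mathcal{D}$ of radius $R$ and observe that, by rotational symmetry, it suffices to track the distance $v$ from $X_1$ to the second point $X_2$. The event $\{D_{xy} \le v\}$ conditioned on $X_1$ is exactly $\{X_2 \in \mathcal{D} \cap B(X_1, v)\}$, so the conditional CDF is $\lvert \mathcal{D} \cap B(X_1,v)\rvert / (\pi R^2)$; integrating over the position of $X_1$ (again uniform, density $1/(\pi R^2)$) gives
\begin{equation*}
F_{D_{xy}}(v) = \frac{1}{\pi^2 R^4} \int_{\mathcal{D}} \lvert \mathcal{D} \cap B(x, v)\rvert \, dx .
\end{equation*}
The inner integral is the expected overlap area, a standard quantity; I would either quote it from~\cite{8} or compute it by writing $\lvert \mathcal{D} \cap B(x,v)\rvert = \int_{\mathcal{D}} \mathbf{1}\{\lvert x - y\rvert \le v\}\, dy$ and switching to the separation variable.

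Alternatively — and this is likely the cleaner route for a self-contained appendix — I would work directly with the density. Writing $\rho = \lvert x_1\rvert$ for the distance of the first point from the center, its density is $f_\rho(\rho) = 2\rho/R^2$ on $[0,R]$. Conditioned on $\rho$, the second point lies on a circle of radius $v$ about $X_1$ with the appropriate arc-length weighting, and the density of $v$ is $(2v/(\pi R^2)) \cdot \theta(\rho, v)$ where $\theta(\rho,v)$ is the half-angle subtended at $X_1$ by the portion of the circle of radius $v$ lying inside $\mathcal{D}$: this is $\pi$ when $v \le R - \rho$ and $\arccos\!\big((\rho^2 + v^2 - R^2)/(2\rho v)\big)$ when $R-\rho < v < R+\rho$. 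Then
\begin{equation*}
f_{D_{xy}}(v) = \frac{2v}{\pi R^2} \int_0^R \theta(\rho, v)\, f_\rho(\rho)\, d\rho = \frac{4v}{\pi R^4} \int_0^R \rho\, \theta(\rho, v)\, d\rho .
\end{equation*}
Splitting the $\rho$-integral at the breakpoints and carrying out the arccos integral (substitution $u = \rho^2$, then integration by parts) yields the stated closed form with the $\arccos(v/2R) - (v/2R)\sqrt{1-(v/2R)^2}$ factor.

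The main obstacle is the evaluation of $\int \rho \arccos\!\big((\rho^2+v^2-R^2)/(2\rho v)\big)\, d\rho$ over the range where the argument is between $-1$ and $1$: it requires a careful change of variables and an integration by parts in which a square-root term $\sqrt{1 - ((\rho^2+v^2-R^2)/(2\rho v))^2}$ appears and must be simplified using the factorization of the quartic under the radical, together with attention to the sign of the limits so that the boundary terms vanish correctly. Once this integral is handled, collecting terms and simplifying is routine. Since this lemma is quoted verbatim as Theorem~2.3.13 of~\cite{8}, in the paper I would present only this sketch and refer the reader to~\cite{8} for the full computation; the companion Lemma~\ref{lm:segment} for the vertical segment is elementary (the separation $D_z = \lvert Z_1 - Z_2\rvert$ of two uniform points on $[0,H]$ has the well-known triangular density $f_{D_z}(w) = 2(H-w)/H^2$), from which the stated $f_Z$ follows by the transformation $z = w^2$.
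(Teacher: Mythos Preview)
Your sketch is correct and follows the classical kinematic derivation of the disk line-picking density; the conditioning on $\rho=\lvert X_1\rvert$, the identification of the half-angle $\theta(\rho,v)$, and the integration-by-parts strategy for the arccos integral are all standard and would yield the stated formula. However, there is nothing in the paper to compare it against: the paper does not prove Lemma~\ref{lm:disk} at all. It is introduced in the appendix explicitly as a ``well-established geometric result'' quoted from~\cite{8} (Theorem~2.3.13 there) and used as a black-box building block for Theorem~\ref{thm:d_dist}. You already anticipated this in your final paragraph, so your proposal is appropriate in both content and scope.
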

	
	\begin{lemma}[Theorem 2.2.2 in \cite{8}]\label{lm:segment}
		The probability density function of the distance between two uniformly random points within a vertical line segment of length H is given as 
		\begin{equation}
			f_{D_z}(z) = \frac{2(H-z)}{H^2},
		\end{equation}
		for $ 0 \le z \le H$.
	\end{lemma}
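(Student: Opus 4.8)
The plan is to argue directly from the definition of $D_z$ as the distance between two independent points on the segment. Let $X_1$ and $X_2$ denote their scalar positions; by hypothesis each is uniform on $[0,H]$ and they are independent, so $D_z = |X_1 - X_2|$. I would obtain the density by first computing the cumulative distribution function $F_{D_z}(z) = \mathbb{P}(|X_1 - X_2| \le z)$ and then differentiating in $z$. Working with the CDF rather than the density directly lets me exploit a clean geometric picture in the plane.

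The key step is geometric. Since $(X_1, X_2)$ is uniform over the square $[0,H]^2$ with constant joint density $1/H^2$, every probability of interest equals the area of the corresponding planar region divided by $H^2$. It is most convenient to evaluate the complementary event $\{|X_1 - X_2| > z\}$, whose region is the union of the two triangles $\{X_1 - X_2 > z\}$ and $\{X_2 - X_1 > z\}$ inside the square. For $0 \le z \le H$ each of these is a right triangle with both legs of length $H - z$, hence of area $\tfrac{1}{2}(H-z)^2$, giving a combined area of $(H-z)^2$. Therefore
\begin{equation*}
\mathbb{P}(|X_1 - X_2| > z) = \frac{(H-z)^2}{H^2},
\qquad
F_{D_z}(z) = 1 - \frac{(H-z)^2}{H^2}.
\end{equation*}

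Differentiating the CDF on $(0,H)$ yields the claimed density:
\begin{equation*}
f_{D_z}(z) = \frac{d}{dz}\!\left[\,1 - \frac{(H-z)^2}{H^2}\,\right] = \frac{2(H-z)}{H^2},
\end{equation*}
with $f_{D_z}(z) = 0$ outside $[0,H]$. As a consistency check I would verify the endpoints ($f_{D_z}(0) = 2/H$, $f_{D_z}(H) = 0$) and the normalization $\int_0^H \tfrac{2(H-z)}{H^2}\,dz = 1$.

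There is no deep obstacle here: this is an elementary geometric-probability fact, and indeed it is cited as a known result (Theorem 2.2.2 in~\cite{8}). The only point that warrants care is the bookkeeping of the two triangular regions—correctly identifying both as congruent right triangles with legs $H - z$, avoiding any double counting along the diagonal band $\{|X_1 - X_2| = z\}$ (a set of measure zero), and handling the boundary cases $z = 0$ and $z = H$. As an independent cross-check, I would note that $X_1 - X_2$ is the convolution of two uniform densities and hence has a symmetric triangular density on $[-H,H]$; folding it onto the positive axis via the absolute value reproduces the same expression, but I prefer the area argument for its transparency.
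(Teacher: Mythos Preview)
Your derivation is correct and is the standard elementary argument for this classical result. Note, however, that the paper does not actually prove this lemma: it is stated as a citation of Theorem~2.2.2 in~\cite{8} and used as a building block without proof, so there is no paper-side argument to compare against.
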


	\begin{lemma}\label{lm:condistance}
		Given an instantiation of $L_s=l$, the conditional PDF of $U$ is:
		\begin{equation}
			f_{U|L_s}(u|l) = \frac{f_L(u)}{1 - F_L(l)}, \quad u \ge l,
		\end{equation} 
	\end{lemma}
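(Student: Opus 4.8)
The plan is to obtain this conditional density by a Bayes-type argument, since the conditioning event $\{L_s = l\}$ has probability zero and must be interpreted through densities. I would start from the fact (immediate from the model, and used in Lemma~\ref{lm:ls}) that the $N-1$ transmitter distances $L_1,\dots,L_{N-1}$ are i.i.d.\ with density $f_L$ and CDF $F_L$, that $L_s=\min_i L_i$, and that $U$ is the distance to an arbitrary one of the remaining $N-2$ interfering transmitters. By exchangeability of the $L_i$ it suffices to condition on the event that a fixed transmitter --- say, the first --- attains the minimum and to take $U=L_2$; since the $L_i$ have a density the minimizer is a.s.\ unique, so these events partition the sample space up to a null set, and the conditional law of $U$ is the same in each, hence equal to the unconditional-on-index conditional law.

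Next I would write the joint density of $(L_1,\dots,L_{N-1})$ as $\prod_{i=1}^{N-1} f_L(\ell_i)$, restrict to the region $\{\ell_i>\ell_1,\ i\ge 2\}$ where $\ell_1$ is the minimum, fix $\ell_1=l$ and $\ell_2=u$ with $u>l$, and integrate out the remaining $N-3$ variables $\ell_3,\dots,\ell_{N-1}$ over $(l,\infty)$, each contributing a factor $1-F_L(l)$. This gives a joint density for $(L_s,U)$ on this event proportional to $f_L(l)\,f_L(u)\,(1-F_L(l))^{N-3}$, supported on $u>l$. Marginalizing over $u\in(l,\infty)$ yields $\int_l^{\infty} f_L(u)\,du = 1-F_L(l)$, so the marginal density of $L_s$ on this event is $f_L(l)\,(1-F_L(l))^{N-2}$, which up to the combinatorial factor $N-1$ for the choice of serving index matches the $f_{L_s}$ of Lemma~\ref{lm:ls}. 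Dividing the joint density by this marginal, the factors $f_L(l)\,(1-F_L(l))^{N-3}$ and the combinatorial constants cancel, leaving $f_{U\mid L_s}(u\mid l)=f_L(u)/(1-F_L(l))$ for $u\ge l$ and $0$ otherwise.

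I expect the only real obstacle to be bookkeeping rather than mathematics: making the zero-probability conditioning precise, and keeping the exchangeability and combinatorial factors consistent between the ``which transmitter serves'' and ``which interferer is tagged'' choices so that nothing is double-counted; the remaining computation is short. A useful sanity check is that the answer is exactly the law of $L$ truncated to $(l,\infty)$ --- conditioning the nearest transmitter to be at distance $l$ is the same as conditioning every other transmitter to lie at distance at least $l$, a memorylessness-type property of the minimum of i.i.d.\ variables --- and that it integrates to one, since $\int_l^{\infty} f_L(u)/(1-F_L(l))\,du = 1$.
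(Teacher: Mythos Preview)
Your proposal is correct. The paper's own proof is considerably shorter: it simply asserts that, given $L_s=l$, every interferer satisfies $L_i>l$, so $\mathbb{P}(U\le u\mid L_s=l)=\mathbb{P}(L_i\le u\mid L_i>l)=\tfrac{F_L(u)-F_L(l)}{1-F_L(l)}$, and then differentiates in $u$. Your joint-density computation is a genuinely more careful route to the same truncation result: it makes precise the zero-probability conditioning, handles the exchangeability and index-selection explicitly, and verifies consistency with the marginal $f_{L_s}$ of Lemma~\ref{lm:ls}. The paper's approach is quicker but leaves the key identification ``$\{L_s=l\}$ for an interferer reduces to $\{L_i>l\}$'' as a heuristic; your approach supplies exactly the bookkeeping that justifies that step.
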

	
    \begin{proof}
The conditional PDF of an interferer's distance $U$ given $L_s=l$ is derived from the fact that all interferer distances must be greater than $l$. By definition of conditional probability, for any $u \ge l$, the conditional CDF is:
\begin{align*}
    \mathbb{P}(U \le u | L_s = l) &= \mathbb{P}(L_i \le u | L_i > l) \\
    &= \frac{\mathbb{P}(l < L_i \le u)}{\mathbb{P}(L_i > l)} \\
    &= \frac{F_L(u) - F_L(l)}{1 - F_L(l)}.
\end{align*}
Taking the derivative with respect to $u$ yields the conditional PDF:
$$
f_{U|L_s}(u|l) = \frac{d}{du}\left( \frac{F_L(u) - F_L(l)}{1 - F_L(l)} \right) = \frac{f_L(u)}{1 - F_L(l)},
$$
for $u \ge l$, which completes the proof.
\end{proof}

	 \begin{corollary}
     By Theorem~\ref{thm:d_dist}, we can obtain the closed-form expression of the distance distribution PDF.
     For the domain of integration depends on the relationship between $h$, $R$ and $l$, the PDF is divided into 4 part.
	
 	When $0\le l\le 2R,l\le h$, the final PDF expression as follows 
		\begin{equation}
	 		\begin{aligned}
	 			f_L\left( l \right) =&\frac{2l^2(2h-l)}{R^2h^2}+\frac{l^2(l^2+2R^2)\sqrt{4R^2-l^2}}{2\pi R^4h^2}\\
	 			&+\frac{4l(l^2-R^2)}{\pi R^2h^2}\mathrm{arc}\sin \left( \frac{l}{2R} \right)\\
	 			&+\frac{32l}{3\pi Rh}\left( 1-\frac{l^2}{4R^2} \right) K\left( \frac{l}{2R} \right) \\
 			&-\left( 1+\frac{l^2}{4R^2} \right) E\left( \frac{l}{2R} \right)
	 		\end{aligned}
	 	\end{equation}

	 	When $0\ \le 2R < l ,l \leq h$, the final PDF expression as follows
	 	\begin{equation}
	 		\begin{aligned}
	 			f_L(l)
	 			& = \frac{4l^2}{R^2 h} - \frac{2l}{h^2} + \frac{4l^2(l^2-4R^2)}{3\pi R^4 h} K\left(\frac{2R}{l}\right) \\
	 			&- \frac{4l^2(l^2+4R^2)}{3\pi R^4 h} E\left(\frac{2R}{l}\right)\\
	 		\end{aligned}
            \end{equation}

	 	when $0 \leq l \leq 2R,h < l$, the final PDF expression as follows 
	 	\begin{figure*}[t]
	 		
	 		\begin{equation}		
	 			\begin{aligned}
	 				f_L(l) &=\left( \frac{8l}{\pi R^2h}\left\{ h\cdot \mathrm{arc}\cos \left( \frac{\sqrt{l^2-h^2}}{2R} \right) -\left( \frac{l^2}{2R}-2R \right) A_3(l,R,h)-2RB_3(l,R,h) \right\} \right)\\
	 				&+\left( -\frac{4l}{\pi R^2h^2}\left\{ \begin{array}{c}
	 					\left( l^2-2R^2 \right) \mathrm{arc}\cos \left( \frac{l}{2R} \right) -\left( l^2-h^2-2R^2 \right) \mathrm{arc}\cos \left( \frac{\sqrt{l^2-h^2}}{2R} \right)\\
	 					-\frac{l\sqrt{4R^2-l^2}}{2}+\frac{\sqrt{(l^2-h^2)(4R^2-l^2+h^2)}}{2}\\
	 				\end{array} \right\} \right)\\
	 				&+\left( -\frac{16l}{3\pi Rh}\left\{ \left( \frac{l^2}{2R^2}-1 \right) \left[ C_3(l,R,h) \right] +\left( 1-\frac{l^2}{4R^2} \right) \left[ D_3(l,R,h) \right] +\frac{h\sqrt{(l^2-h^2)(4R^2-l^2+h^2)}}{8R^3} \right\} \right)\\
	 				&+\left( \frac{2l}{\pi h^2}\left[ \begin{array}{c}
	 					\mathrm{arc}\sin \left( \frac{l^2}{2R^2}-1 \right)\\
	 					-\mathrm{arc}\sin \left( \frac{l^2-h^2-2R^2}{2R^2} \right)\\
	 				\end{array} \right] +\frac{l}{\pi R^4h^2}\left[ \begin{array}{c}
	 					\frac{l(l^2-2R^2)}{2}\sqrt{4R^2-l^2}\\
	 					-\frac{l^2-h^2-2R^2}{2}\sqrt{(l^2-h^2)(4R^2-l^2+h^2)}\\
	 				\end{array} \right] \right)\\
	 			\end{aligned}
	 		\end{equation}
	 	\end{figure*}
	 	where
	 	$$A_3(l,R,h)= K\left( \frac{l}{2R} \right) -F\left( \mathrm{arc}\cos \left( \frac{h}{l} \right) ,\frac{l}{2R} \right)$$
	 	$$B_3(l,R,h)= E\left( \frac{l}{2R} \right) -E\left( \mathrm{arc}\cos \left( \frac{h}{l} \right) ,\frac{l}{2R} \right)$$
	 	$$C_3(l,R,h)=E\left( \frac{l}{2R} \right) -E\left( \mathrm{arc}\sin \left( \frac{\sqrt{l^2-h^2}}{l} \right) ,\frac{l}{2R} \right)$$
	 	$$D_3(l,R,h)=K\left( \frac{l}{2R} \right) -F\left( \mathrm{arc}\sin \left( \frac{\sqrt{l^2-h^2}}{l} \right) ,\frac{l}{2R} \right)$$

	 	When $0\ \le 2R < l ,h < l$, the final PDF expression as follows
	 	\begin{equation}
	 		\begin{aligned}
	 			f_L\left( l \right) = & - \frac{l(3l^2+6R^2+h^2)}{6\pi R^4 h^2}\sqrt{(l^2-h^2)(4R^2-l^2+h^2)} \\
	 			& + \frac{4l(l^2+h^2)}{\pi R^2 h^2}  \arccos\left(\frac{\sqrt{l^2-h^2}}{2R}\right)- \frac{2l}{h^2} \\
	 			& + \frac{4l}{\pi h^2} \arcsin\left(\frac{\sqrt{l^2-h^2}}{2R}\right) \\
	 			& - \frac{4l^2(4R^2-l^2)}{3\pi h R^4}A_4(l,R,h)  \\
	 			& - \frac{4l^2(l^2+4R^2)}{3\pi h R^4}B_4(l,R,h)  \\
	 		\end{aligned}
	 	\end{equation}

	 	where
	 	$$A_4(l,R,h) = \left[ K\left(\frac{2R}{l}\right) - F\left(\arcsin\left(\frac{\sqrt{l^2-h^2}}{2R}\right), \frac{2R}{l}\right) \right]$$
		
	 	$$B_4(l,R,h) = \left[ E\left(\frac{2R}{l}\right) - E\left(\arcsin\left(\frac{\sqrt{l^2-h^2}}{2R}\right), \frac{2R}{l}\right) \right]$$
	 \end{corollary}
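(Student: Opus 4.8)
The plan is to evaluate the convolution integral of Theorem~\ref{thm:d_dist} in closed form. First I would substitute the disk PDF of Lemma~\ref{lm:disk} into $f_{XY}(r)=\frac{f_{D_{xy}}(\sqrt r)}{2\sqrt r}$, which collapses to $f_{XY}(r)=\frac{2}{\pi R^2}\arccos\!\frac{\sqrt r}{2R}-\frac{\sqrt{r(4R^2-r)}}{2\pi R^4}$, and split the vertical factor as $f_Z(l^2-r)=\frac{1}{H\sqrt{l^2-r}}-\frac{1}{H^2}$. Expanding the product in $f_L(l)=2l\int_{r_-}^{r_+}f_{XY}(r)\,f_Z(l^2-r)\,dr$ produces four integrals: two elementary ones, $\int\arccos(\sqrt r/2R)\,dr$ and $\int\sqrt{r(4R^2-r)}\,dr$, and two of elliptic type, $\int\frac{\arccos(\sqrt r/2R)}{\sqrt{l^2-r}}\,dr$ and $\int\frac{\sqrt{r(4R^2-r)}}{\sqrt{l^2-r}}\,dr$.

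The limits of integration follow by intersecting the support conditions $0\le r\le 4R^2$ and $0\le l^2-r\le H^2$, giving $r_-=\max\{0,\,l^2-H^2\}$ and $r_+=\min\{4R^2,\,l^2\}$ (the region is nonempty exactly for $l\le\sqrt{4R^2+H^2}$, the cylinder's diameter). This is precisely the origin of the four-way case split: whether $l<2R$ or $l>2R$ decides if $r_+=l^2$ or $r_+=4R^2$, and whether $l<H$ or $l>H$ decides if $r_-=0$ or $r_-=l^2-H^2$.

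Next I would handle the two elementary integrals directly: $\int\sqrt{r(4R^2-r)}\,dr$ by completing the square (an $\arcsin$-plus-square-root antiderivative), and $\int\arccos(\sqrt r/2R)\,dr$ by parts, using the identity $\frac{d}{dr}\arccos(\sqrt r/2R)=-\tfrac12\,[r(4R^2-r)]^{-1/2}$. The same identity, applied in a second integration by parts with $v=-2\sqrt{l^2-r}$, reduces $\int\frac{\arccos(\sqrt r/2R)}{\sqrt{l^2-r}}\,dr$ to a boundary term plus $-\int\sqrt{\tfrac{l^2-r}{r(4R^2-r)}}\,dr$; the remaining elliptic integral is already of the form $\int\sqrt{\tfrac{r(4R^2-r)}{l^2-r}}\,dr$. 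Both integrands are square roots of rational functions with branch points at $r=0,\,4R^2,\,l^2$, hence reducible to Legendre normal form. I would use the substitution $r=l^2\sin^2\theta$ when $l\le 2R$ (the branch point $l^2$ is then the relevant one and the modulus becomes $k=l/2R$) and $r=4R^2\sin^2\theta$ when $l>2R$ (modulus $k=2R/l$). When $l\le H$ the lower limit $r_-=0$ yields complete integrals $K,E$; when $l>H$ the lower limit $r_-=l^2-H^2$ yields incomplete $F,E$ whose amplitude is $\arccos(H/l)=\arcsin(\sqrt{l^2-H^2}/l)$ in the $l\le 2R$ branch and $\arcsin(\sqrt{l^2-H^2}/2R)$ in the $l>2R$ branch — exactly the arguments appearing in $A_3,B_3,C_3,D_3$ and $A_4,B_4$ (writing $h$ for the height $H$). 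Assembling $f_L(l)=2l\,(I_1+I_2+I_3+I_4)$ in each regime and collecting terms then gives the four displayed expressions.

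The main obstacle is the elliptic reduction together with the bookkeeping around it: selecting the correct normal-form substitution in each regime, tracking the boundary terms generated by the two integrations by parts (they produce the algebraic and inverse-trigonometric parts of the formulas, and conveniently vanish at $r=l^2$ when $r_+=l^2$ because $\sqrt{l^2-r}\to0$ there), and reconciling the $\arccos$/$\arcsin$ forms of the amplitude. The algebra is heaviest in the third case ($l\le 2R$, $H<l$), which is why that formula occupies a full-width display; verifying that the $K,E,F$ contributions of $I_1$ and $I_2$ recombine into the stated $A_3,\dots,D_3$ groupings (and likewise $A_4,B_4$ in the fourth case) is the delicate step, whereas the first two cases are the $l\le H$ specializations where $r_-=0$ and only complete integrals survive.
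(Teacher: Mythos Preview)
The paper provides no proof for this Corollary: it is simply stated in the appendix as following from Theorem~\ref{thm:d_dist}, with the four-case formulas listed and no derivation shown. Your proposal is therefore not competing with a paper proof; it supplies the argument the paper omits.

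The plan is correct. Once the supports $0\le r\le 4R^2$ and $0\le l^2-r\le H^2$ are intersected, the convolution integral of Theorem~\ref{thm:d_dist} has exactly the limits $r_-=\max\{0,l^2-H^2\}$ and $r_+=\min\{4R^2,l^2\}$ you identify, which produces the four cases. Your expansion of $f_{XY}(r)$ and $f_Z(l^2-r)$ is accurate, and the resulting four integrands split cleanly into the two elementary and two elliptic pieces you describe. Reducing the elliptic pieces via $r=l^2\sin^2\theta$ (modulus $l/2R$) when $l\le 2R$ and $r=4R^2\sin^2\theta$ (modulus $2R/l$) when $l>2R$ is the standard route to Legendre normal form, and the amplitudes you obtain at the lower limit $r_-=l^2-H^2$ match those appearing in $A_3,\dots,D_3$ and $A_4,B_4$. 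The remaining labor is the bookkeeping you already flag---tracking the integration-by-parts boundary terms and collecting like terms in each regime---which is tedious but mechanical.
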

\bibliographystyle{ieeetr}
\bibliography{references}
\end{document}